\theoremstyle{remark}
\DeclareMathOperator{\In}{In}
\DeclareMathOperator{\Id}{Id}
\DeclareMathOperator{\Out}{Out}
\DeclareMathOperator{\Occ}{Occ}
\newcommand{\prc}{\lstinline[language=C]}
\newcommand*{\eg}{e.g.\@,\xspace}
\newcommand*{\ie}{i.e.\@,\xspace}
\newcommand*{\resp}{resp.\@\xspace}
\newcommand*{\etc}{%
	\@ifnextchar{.}%
	{etc}%
	{etc.\@\xspace}%
}
\renewcommand{\boxed}[1]{\text{\fboxsep=.2em\fbox{\m@th$\displaystyle#1$}}}
\newcommand{\sccs}{{\scshape scc}s\xspace}
\newcommand{\comm}[1]{\mathtt{#1}}
\newcommand{\DFG}{{\sc dfg}\xspace}
\newcommand{\DFGs}{{\sc dfg}s\xspace}
\newcommand{\dfg}[1]{\mathbb{M}(\comm{#1})}
\newcommand{\corr}[1]{\mathrm{Corr}(#1)}
\renewcommand{\gets}{=}
\definecolor{clem}{HTML}{60656F} %
\definecolor{thos}{HTML}{ff0066} %
\definecolor{thor}{HTML}{EB6424} %
\definecolor{neea}{HTML}{823038} %
\definecolor{darkgray}{rgb}{.4,.4,.4}
\ttfamily\linespread{4},
\newcommand{\replabel}{\label} %
\NewDocumentCommand{\repeattheorem}{m}
{
	\group_begin:
	\renewcommand{\thetheorem}{\ref{#1}}
	\renewcommand{\replabel}[1]{\tag{\ref{##1}}}
	\prop_item:Nn \g_reptheorem_prop { #1 }
	\endtheorem
	\group_end:
}
\NewDocumentEnvironment{reptheorem}{m+b}
{
	\prop_gput:Nnn \g_reptheorem_prop { #1 } { \theorem #2 \endtheorem }
	\theorem#2\unskip\label{#1}\endtheorem
}{}
\NewDocumentCommand{\bywhom}{m}{%
	{\nobreak\hfill\penalty50\hskip1em\null\nobreak
		\hfill\mbox{\normalfont(#1)}%
		\parfillskip=0pt \finalhyphendemerits=0 \par}%
}
\NewDocumentEnvironment{pquotation}{m}
{\begin{quoting}[
		indentfirst=true,
		leftmargin=\parindent,
		rightmargin=\parindent]\itshape}
	{\bywhom{#1}\end{quoting}}
\definecolor{high}{HTML}{00CC99}  %
\definecolor{low}{HTML}{fff7bc}  %
\newcommand*{\opacity}{90}%
\newcommand*{\minval}{1.0}%
\newcommand*{\maxval}{5.0}%
\newcommand{\gradient}[1]{
    \ifdimcomp{#1pt}{>}{\maxval pt}{#1}{
    \ifdimcomp{#1pt}{<}{\minval pt}{#1}{
         \pgfmathparse{int(round(100*(#1/(\maxval-\minval))-(\minval*(100/(\maxval-\minval)))))}
        \xdef\tempa{\pgfmathresult}
        \cellcolor{high!\tempa!low!\opacity} #1
    }}
 }
\begin{document}

\title{A Novel Loop Fission Technique Inspired by Implicit Computational Complexity%
	\texorpdfstring{%
    	\thanks{
		This research is supported by the \href{https://face-foundation.org/higher-education/thomas-jefferson-fund/}{Thomas Jefferson Fund} of the Embassy of France in the United States and the \href{https://face-foundation.org/}{FACE Foundation}. Th.\ Rubiano and Th.\ Seiller are also supported by the Île-de-France region through the DIM RFSI project \enquote{CoHOp}.}
		}%
	{} %
}
\titlerunning{A Novel Loop Fission Technique Inspired by ICC} %
\author{%
		Clément Aubert\texorpdfstring{\inst{1}\orcidID{0000-0001-6346-3043}}{} \and
		Thomas Rubiano\texorpdfstring{\inst{2}}{} \and
		Neea Rusch\texorpdfstring{\inst{1}\orcidID{0000-0002-7354-5330}}{} \and
    	Thomas Seiller\texorpdfstring{\inst{2,3}\orcidID{0000-0001-6313-0898}}{}
	}
\authorrunning{C. Aubert et al.}

\institute{%
	School of Computer and Cyber Sciences, Augusta University \and
	LIPN – UMR 7030 Université Sorbonne Paris Nord \and
	CNRS
}
\maketitle %

\begin{abstract}
This work explores an unexpected application of Implicit Computational Complexity (ICC) to parallelize loops in imperative programs.
Thanks to a lightweight dependency analysis, our algorithm allows splitting a loop into multiple loops that can be run in parallel, resulting in gains in terms of execution time similar to state-of-the-art automatic parallelization tools when both are applicable.
Our graph-based algorithm is intuitive, language-agnostic, proven correct, and applicable to all types of loops, even if their loop iteration space is unknown statically or at compile time, if they are not in canonical form or if they contain loop-carried dependency.
As contributions we deliver the computational technique, proof of its preservation of semantic correctness, and experimental results to quantify the expected performance gains. Our benchmarks also show that the technique could be seamlessly integrated into compiler passes or other automatic parallelization suites.
We assert that this original and automatable loop transformation method was discovered thanks to the \enquote{orthogonal} approach offered by ICC.
\end{abstract}

\keywords{%
	Analysis and Verification of Parallel Program \and 
	Automatic Parallelization \and
	Loop Transformation \and
	Implicit Computational Complexity
}

\section{%
	Original Approaches to Automatic Parallelization} %
\label{sec:intro}

\subsection{Use Cases for Correct Automatic Parallelization}

The demand for perpetually more performant systems has historically driven innovation in hardware, by increasing numbers of transistors and improving clock speeds, but with Dennard scaling and the end of Moore's law in sight, the focus is steadily shifting toward obtaining gains through high-performance and parallel computing~\cite[Chapter 1]{ppc2022}.
Existing parallel programming APIs, such as OpenMP~\cite{OARB21}, PPL~\cite{ppl2021}, and oneTBB~\cite{onetbb2022}, facilitate this progression; but several outstanding issues remain: classic algorithms are written sequentially without parallelization in mind and require reformatting to fit the parallel paradigm. Suitable sequential programs with opportunity for parallelization must be modified, often manually, by inserting parallelization directives. This is a time-consuming and error-prone process: numerous dependencies and multiple levels of function calls must be analyzed to identify regions that can be safely parallelized, before inserting the correct directives---a challenging task to programmers accustomed to sequential execution environment. Lastly, the state explosion resulting from parallelization makes it impossible to exhaustively test the code running on parallel architectures~\cite%
{baier2008}.

The need for parallelization extends beyond software presently developed: to leverage the potential speedup available on modern hardware, all programs---including legacy software---should instruct the hardware to take advantage of its available processors. This induces demand for automatic transformations of large bodies of software to semantically equivalent parallel programs. Since inserting \emph{incorrect} parallel directives is easy, and can lead to performance degradation or alteration in the program's behavior, \emph{correct} automatic parallelization can be used in lieu of human ingenuity, and benchmarked to approximate the potential gain. A useful automatic parallelization tool needs to deliver two things: a cost-benefit analysis, to show it generates speedup on parallel architectures, and proof of preservation of semantic correctness~\cite[Section 3.5]{chandra2000}.

Compilers offer an ideal integration point for many program analyses and optimizations. Automatic parallelization is already a standard feature in developing industry compilers, optimizing compilers, and specialty source-to-source compilers. Tools that perform local transformations, generally on loops, are frequently conceived---but not necessarily implemented---as compiler passes. How those passes are intertwined with sequential code optimizations can sometimes be problematic~\cite{Bertolacci2018}: as an example, OpenMP directives are by default applied early in the compilation, or even at the front-end, and hence the parallelized source code cannot benefit from sequential optimizations such as unrolling. Furthermore, compilers tend to make conservative choices and often miss opportunities to parallelize, \eg on complex scientific and engineering codes~\cite{Bertolacci2018,Holewinski2012}. 

Given this background, the need to parallelize source code automatically and at scale is evident, but the existing approaches are not perfect. The contribution presented in this paper offers an incremental improvement in this direction: it introduces an automatable and graph-based computational method for semantic-preserving loop transformation. By producing a transformed program, shown to be amiable to integration or pipelining with existing automatic parallelization tools, it fits the described landscape by offering potential to improve versatility and richness of various existing parallel compilation toolchains. 

\subsection{Leveraging ICC %
	 for Correct and Universal Transformation}

The technique presented in this paper is founded on Implicit Computational Complexity (ICC) theory~\cite{DalLago2012a}: 
this work is part of a series~\cite{Aubert2022g} that explores how ICC can provide new---sometimes orthogonal---approaches to problems such as code optimization~\cite{Moyen2017,Moyen2017b} or static analysis~\cite{Aubert2022b}.
Critical to this approach is a strong mathematical backbone that allows to \enquote{embed} in the program itself a guarantee of its resource usage, using \eg bounded recursion~\cite{Bellantoni1992,Leivant1993} or type systems~\cite{Baillot2004,Lafont2004}.
This orthogonal approach sometimes allows avoiding difficulties other techniques must address and offers gain in terms of \eg speed, but possibly at the price of precision~\cite[Sect. C]{Aubert2022b}.

More precisely, the presented technique demonstrates how a dependency analysis mechanism, first introduced in our previous work~\cite{Moyen2017}, can be further leveraged to obtain \emph{loop-level parallelism}: a form of parallelism concerned with extracting parallel tasks from loops.
We identify an original way of performing \emph{loop fission}, an optimization technique that breaks loops into multiple loops with the same condition or index range, each taking only a part of the original loop's body.
Our original technique possesses three notable properties:

\begin{description}
	\item[Suitable to loops with unknown iteration spaces]---program analysis does not require knowing loop iteration space statically nor at compile time, making it applicable to loops not in canonical form, which are often ignored (\autoref{app:sec:comparison}).
	\item[Loop-agnostic]---the technique requires practically no structure from the loops: in particular, they can be \prc|while|, \prc|do ... while| or \prc|for| loops, have arbitrarily complex update and termination conditions, loop-carried dependencies, and arbitrarily deep loop nests.
	\item[Language-agnostic]---the method can be used on any imperative language, making it flexible and suitable for realization and integration with tools and languages ranging from high-level to intermediate representations.
\end{description}

The intent of our work is not to replace polyhedral models~\cite{Karp1967}---that are also pushing to remove some restrictions~\cite{Benabderrahmane2010}---, advanced dependency analysis or tools developed for very precise cases (such as loop tiling~\cite{Bertolacci2018}), nor to build concrete competing implementations.
Our goal is to illustrate how ICC has potential to introduce novel and orthogonal optimization techniques: we take as a positive sign the fact that we can facilitate the discovery of equivalent parallel implementations that are not reachable through a pre-established set of correct transformation rules, as complementary to existing methods. %
This also benefit our approach, making scheduling or optimization of caching out of the scope of this work, since they can be deferred to the tool implementing our algorithm.

\subsection{Our Contribution: From Theory to Benchmarks}
	
	Our contribution spans from theoretical foundations to concrete measurements, to deliver a complete perspective on the design and expected real-time efficiency of the introduced method. We present three contributions:  \begin{enumerate}
 
 \item The design of a loop fission transformation---\autoref{ssec:algo}---that analyzes dependencies of loop condition and body variables; establishes cliques between statements, and splits independent cliques into multiple loops.
 
 \item The correctness proof---\autoref{sec:correctness}---that guarantees the semantic preservation of loop transformation.
 
\item Experimental results---\autoref{sec:quality}---that evaluate the potential gain of the proposed technique, including for loops with unknown iteration spaces, and  demonstrates its integratability with existing parallelization frameworks.

\end{enumerate}

This paper is organized as follows: \autoref{sec:background} defines the %
 \enquote{building blocks} for our program transformation method---its language and how dependencies are computed---, \autoref{sec:algo} details the loop fission algorithm, \autoref{sec:quality} presents the experimental results, and \autoref{sec:conclusion} concludes.

\section{Background: Language and Dependency Analysis}
\label{sec:background}

\subsection{A Simple While Imperative Language With Parallel Capacities}

We work with a simple imperative \texttt{WHILE}-language, with semantics similar to \texttt{C}, extended with a \prc|parallel| command, similar to \eg OpenMP's directives~\cite{OARB21}, allowing to execute its arguments in parallel.
The grammar is given by: 
\begin{align*}
	\textit{var} \Coloneqq & \comm{i} \ | \ \comm{j} \ | \ \hdots \ | \ \comm{s} \ | \ \comm{t} \ | \hdots \ | \ \comm{x_1}\ |\ \comm{x_2}\ | \ \hdots \ | \ \comm{z_n} \ | \ \textit{var}[\textit{exp}] \tag{Variables}\\
	\textit{exp} \Coloneqq & \textit{var} \ |\ \textit{val} \ |\ \textit{op(}\emph{exp}, \hdots,\emph{exp}\texttt{)} \tag{Expression}  \\
	\textit{com} \Coloneqq & \textit{var} \gets \mathit{exp}\ |\ \text{\prc|if| } \textit{exp} \text{ \prc|then| } \textit{com} \text{ \prc|else| } \textit{com}\  |\ \\ 
				           & \text{\prc|while| } \textit{exp} \text{ \prc|do| } \textit{com} \ |\ \text{\prc|use|}(\textit{var}, \hdots, \textit{var}) \ |\ \text{\prc|skip|}\ |\ \\
						   & \textit{com;com}\ |\ 	\text{\prc|parallel|}\{\textit{com}\}\{\textit{com}\}\cdots\{\textit{com}\} \tag{Command}
\end{align*}

A variable represents either an undetermined \enquote{primitive} datatype, \eg not a reference variable, or an array, whose indices are given by an expression. %
An expression is either a variable, a value (\eg integer literal) or the application to expressions of some operator \textit{op}, which can be \eg relational (\eg \texttt{==},  \texttt{<}) or arithmetic  (\eg \texttt{+}, \texttt{-}).
We let \(\comm{V}\) (\resp \(\comm{e}\), \(\comm{C}\)) ranges over variables (\resp expression, command), write \eg \prc|if| \(\comm{e}\) \prc|then| \(\comm{C}\) for \prc|if| \(\comm{e}\) \prc|then| \(\comm{C}\) \prc|else| \(\comm{skip}\), and sometimes replace the semicolon with a new line.
We assume commands to be correct, \eg with operators correctly applied to expressions, no out-of-bounds errors, \etc

A \texttt{WHILE} program is thus a sequence of statements, each
statement being either an \emph{assignment}, a \emph{conditional}, a
\emph{while} loop, a \emph{function call}\footnote{%
	The
	\texttt{use} command represents any command which does not modify its
	variables but use them and should not be moved around carelessly
	(\eg a \prc|printf|). In practice, we currently treat all
	function calls as \texttt{use}, even if the function is
	pure. 
}%
 or a \emph{skip}.
 \emph{Statements} are abstracted into \emph{commands}, which can be a statement, a sequence of commands, or multiple commands to be run in parallel.
The semantics of \prc|parallel| is the following: variables appearing in the arguments are considered local, and the value of a given variable \prc|x| after execution of the \prc|parallel| command is the value of the last modified local variable $\comm{x}$. 
This implies possible race conditions, but our transformation will be robust to those: we will assume given \prc|parallel|-free programs, and will introduce \prc|parallel| commands that either uniformly update the variables accross commands, or update them in only one command.

For convenience we define the following sets of variables.

\begin{definition}
	\label{def:in-out-occ}
	Let $\comm{C}$ be a command, we let  $\Out(\comm{C})$ (\resp $\In(\comm{C})$, \(\Occ(\comm{C})\)) be the set of variables \emph{modified} by (\resp \emph{used} by, \emph{occuring} in) $\comm{C}$, as follows:
	
	\noindent
	\resizebox{1\textwidth}{!}{
		\begin{tabular}{| c || c | c | c |}
			\hline
			$\comm{C}$ & $\Out(\comm{C})$ & $\In(\comm{C})$ & $\Occ(\comm{C})$ \\
			\hline
			\hline 
			\textnormal{\prc|x|} = $\comm{e}$ & \prc|x| & $\Occ(\comm{e})$  & \textnormal{\prc|x|} $\cup \Occ(\comm{e})$\\
			\hline 
			\textnormal{\prc|t[|}$\comm{e_1}$\textnormal{\prc|]|} = $\comm{e_2}$ & \textnormal{\prc|t|} & $ \Occ(\comm{e_1}) \cup \Occ(\comm{e_2})$  & \textnormal{\prc|t|} $\cup \Occ(\comm{e_1}) \cup \Occ(\comm{e_2})$\\
			\hline 
			\textnormal{\prc|if|} $\comm{e}$ \textnormal{\prc|then|} $\comm{C_1}$ \textnormal{\prc|else|}\  $\comm{C_2}$ & $\Out(\comm{C_1}) \cup \Out(\comm{C_2})$ & $\Occ(\comm{e}) \cup \In(\comm{C_1}) \cup \In(\comm{C_2})$ & $\Occ(\comm{e}) \cup \Occ(\comm{C_1}) \cup \Occ(\comm{C_2})$\\ 
			\hline
			\textnormal{\prc|while|} $\comm{e}$ \textnormal{\prc|do|} $\comm{C}$ & $\Out(\comm{C})$ & $\Occ(\comm{e}) \cup \In(\comm{C})$ & $\Occ(\comm{e}) \cup \Occ(\comm{C})$  \\
			\hline
			\textnormal{\prc|use|}(\textnormal{\prc|x|}$_1$, $\hdots$, \textnormal{\prc|x|}$_n$) & $\emptyset$ & $\{\textnormal{\prc|x|}_1, \hdots, \textnormal{\prc|x|}_n\}$ & $\{\textnormal{\prc|x|}_1, \hdots, \textnormal{\prc|x|}_n\}$\\
			\hline
			\textnormal{\prc|skip|} & $\emptyset$ & $\emptyset$ & $\emptyset$ \\
			\hline
			$\comm{C_1};\comm{C_2}$ & $\Out(\comm{C_1}) \cup \Out(\comm{C_2})$ & $\In(\comm{C_1}) \cup \In(\comm{C_2})$ & $\Occ(\comm{C_1}) \cup \Occ(\comm{C_2})$\\
			\hline
		\end{tabular}
	}
for         $ \begin{aligned}[t]
\Occ(\textnormal{\prc|x|}) &= \textnormal{\prc|x|} 
&&&  
\Occ(\textnormal{\prc|t[|$\comm{e}$\prc|]|}) &= \textnormal{\prc|t|} \cup \Occ(\comm{e})\\
\Occ(\textit{val}) & = \emptyset
&&& 
\Occ(\textnormal{\prc|op(|$\comm{e_1}, \hdots,\comm{e_n}$\prc|)|}) &= \Occ(\comm{e_1}) \cup \cdots \cup \Occ(\comm{e_n})
\end{aligned}$
\end{definition}

Our treatment of arrays is an over-approximation: we consider the array as a single entity, and that changing one value in it changes it completely.
This is however satisfactory: since we will not split loop \enquote{horizontally} (\eg splitting the iteration space between threads) but \enquote{vertically} (\eg splitting the tasks between threads), we want each thread in the \prc|parallel| command to have \enquote{full control} of the array it modifies, and not to synchronize its writes with other commands.

\subsection{Datalow Graphs for Loop Dependency Analysis}

The loop transformation algorithm relies fundamentally on its ability to analyze data-flow dependencies between loop condition and variables in the loop body, to identify opportunities for loop fission. In this section we sketch the principles of this dependency analysis, founded on the theory of \emph{data-flow graphs}, and how it maps to the presented \texttt{WHILE}-language. This dependency analysis was influenced by large body of works related to static analysis \cite{Abel20002,Kristiansen2005b,Lee2001}, semantics \cite{Laird2013,%
Seiller2016} and optimization~\cite{Moyen2017}; but is presented here in self-contained and compact manner.%

\subsubsection{Definition of Data-Flow Graphs}
\label{DFG}

A data-flow graph for a given command $\comm{C}$ is a weighted relation on the set \(\Occ(\comm{C})\).
Formally, this is represented as a matrix over a semi-ring, with the implicit choice of a denumeration of \(\Occ(\comm{C})\).

\begin{definition}
	A \emph{data-flow graph} (\DFG) for a command $\comm{C}$ is a $|\Occ(\comm{C})|\times |\Occ(\comm{C})|$ matrix over a fixed semi-ring $(\mathcal{S},+,\times)$, with $|\Occ(\comm{C})|$ the cardinal of \(\Occ(\comm{C})\).
	We write $\dfg{C}$ the \DFG of $\comm{C}$, and explain how to construct it below. %
\end{definition}

To avoid resizing matrices whenever additional variables are considered, we identify $\dfg{C}$ with its embedding in a larger matrix, \ie we will abusively call the
\DFG of $\comm{C}$ any matrix of the form
$
	\dfg{C} \oplus \Id
$, 
implicitly viewing the additional rows/columns as variables not in $\Occ(\comm{C})$.
We will use weighted relations, or weighted bi-partite graphs, to illustrate these matrices. 
Examples will use the semi-ring $(\{0,1,\infty\},\max,\times)$, which is the specific semiring considered in later sections to represent dependencies:
$\infty$ represents \emph{dependence},
$1$ represents \emph{propagation}, and  $0$ represents \emph{reinitialization} or independence.
\autoref{fig:dependences} introduces these notions and the graphical conventions used throughout this paper.
Note that in the case of dependencies, $\In(\comm{C})$ is exactly the set of variables that are source of a \enquote{dependence} arrow, while $\Out(\comm{C})$ is the set of variables that either are targets of dependence arrows or were reinitialized.

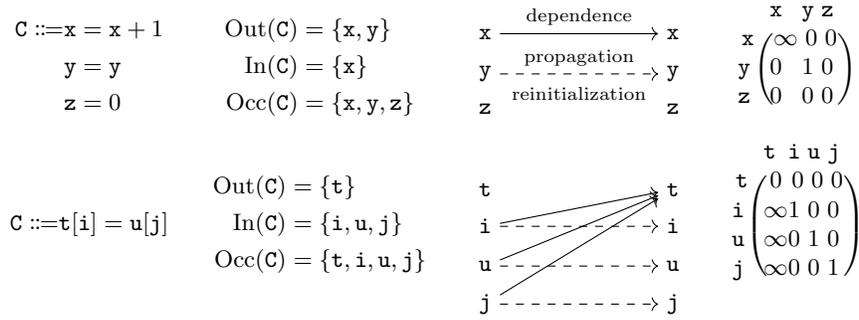
\begin{figure}
	{
	\addtolength\tabcolsep{7pt}
	\centering
	
	\begin{tabular}{c c c c}
	$\begin{aligned}
		\comm{C} \Coloneqq & \comm{x=x+1}\\
						   &\comm{y=y} \\
						   & \comm{z=0}
	\end{aligned}$
	
	& 
	
	$\begin{aligned}
	\Out(\comm{C}) & = \{\comm{x, y}\}\\
	\In(\comm{C})  & = \{\comm{x}\}\\
	\Occ(\comm{C}) & = \{\comm{x, y, z}\}
	\end{aligned}$
	
	&
	
	\begin{tikzpicture}[anchor=base, baseline=-1.4em]
	\node (x0) at (1.5,0) {$\comm{x}$};
	\node (x1) at (1.5,-.5) {$\comm{y}$};
	\node (x2) at (1.5,-1) {$\comm{z}$};
	\node (y0) at (4,0) {$\comm{x}$};
	\node (y1) at (4,-.5) {$\comm{y}$};
	\node (y2) at (4,-1) {$\comm{z}$};
	\draw [->] (x0) -- node[above, font=\scriptsize, midway]{dependence} (y0); 
	\draw [dashed, ->] (x1) -- node[above, font=\scriptsize, midway]{propagation} (y1); 
	\draw [white] (x2) -- node[above, font=\scriptsize, midway, black]{reinitialization} (y2); 
	\end{tikzpicture}
	
	&
		\begin{blockarray}{l c c c}
		\hspace{1.3em} & $\comm{x}$ & $\comm{y}$ & $\comm{z}$\\
		\begin{block}{l ( c c c )}
		$\comm{x}$ & $\infty$ & 0 & 0 \\
		$\comm{y}$ & 0 & 1 & 0 \\
		$\comm{z}$ & 0 & 0 & 0\\
		\end{block}
		\end{blockarray}
	\\
		$\begin{aligned}
		\comm{C} \Coloneqq & \comm{t[i] =u[j]}\\
		\end{aligned}$
	
		& 
		
		$\begin{aligned}
		\Out(\comm{C}) & = \{\comm{t}\}\\
		\In(\comm{C})  & = \{\comm{i, u, j}\}\\
		\Occ(\comm{C}) & = \{\comm{t, i, u, j}\}
		\end{aligned}$
		
		&
		
		\begin{tikzpicture}[anchor=base, baseline=-1.4em]
		\node (x0) at (1.5,0) {$\comm{t}$};
		\node (x1) at (1.5,-.5) {$\comm{i}$};
		\node (x2) at (1.5,-1) {$\comm{u}$};
		\node (x3) at (1.5,-1.5) {$\comm{j}$};
		\node (y0) at (4,0) {$\comm{t}$};
		\node (y1) at (4,-.5) {$\comm{i}$};
		\node (y2) at (4,-1) {$\comm{u}$};
		\node (y3) at (4,-1.5) {$\comm{j}$};
		\draw [->] (x1) --  (y0); 
		\draw [->] (x2) --  (y0); 
		\draw [->] (x3) --  (y0); 
		\draw [dashed, ->] (x1) -- (y1); 
		\draw [dashed, ->] (x2) -- (y2); 
		\draw [dashed, ->] (x3) -- (y3); 
		\end{tikzpicture}
		
		&
		\begin{blockarray}{l c c c c}
			\hspace{1.3em} & $\comm{t}$ & $\comm{i}$ & $\comm{u}$ & $\comm{j}$\\
			\begin{block}{l ( c c c c )}
				$\comm{t}$ & 0 & 0 & 0 & 0\\
				$\comm{i}$ & $\infty$ & 1 & 0 & 0\\
				$\comm{u}$ & $\infty$ & 0 & 1 & 0\\
				$\comm{j}$ & $\infty$ & 0 & 0 & 1\\
			\end{block}
		\end{blockarray}

	\end{tabular}
	}

	\caption{Program examples, sets, and representations of their dependences\label{Fig_threecases}}
	\label{fig:dependences}
\end{figure}

\subsection{Constructing Data-Flow Graphs (\DFGs)}
\label{ssec:constructionDFGs}

The \DFG of a command is computed by induction on the structure of the command.

\subsubsection{Base cases (assignment, skip, use)}%
The \DFG for assignments are obtained by straightforward generalization of the cases %
 illustrated in \autoref{Fig_threecases}, and	$\dfg{\mathtt{skip}}$ is the \enquote{empty matrix} with $0$ rows and columns\footnote{Identifying the \DFG with its embeddings, it is hence the identity matrix of any size.}.

To account for \prc|use|(\prc|x|$_1$, $\hdots$, \prc|x|$_n$), we 
introduce a variable $\mathtt{e}$---standing for \emph{effect}---not being part of the language, and let $\dfg{\text{\prc|use|(\prc|x|$_1$, $\hdots$, \prc|x|$_n$})}$ be the matrix with coefficients
	from each \prc|x|$_{i}$ and $\mathtt{e}$ to $\mathtt{e}$ equal to $\infty$, and $0$
	coefficients otherwise. %

\subsubsection{Composition and multipaths}

The definition of \DFG for a (sequential)
\emph{composition} of commands is an abstraction that allows treating a block of statements as one command with its own \DFG.

\begin{definition}
	$\dfg{\comm{C_{1}};\comm{C_{2}};\dots;\comm{C_{n}}}$
	is %
	the matrix product
	$\dfg{C_{1}}\dfg{C_{2}}\cdots\dfg{C_{n}}$.
\end{definition}

For two graphs, %
the product of their matrices of weights is represented in a standard way, as a graph of length 2 paths; %
as illustrated in \autoref{fig:composition}.

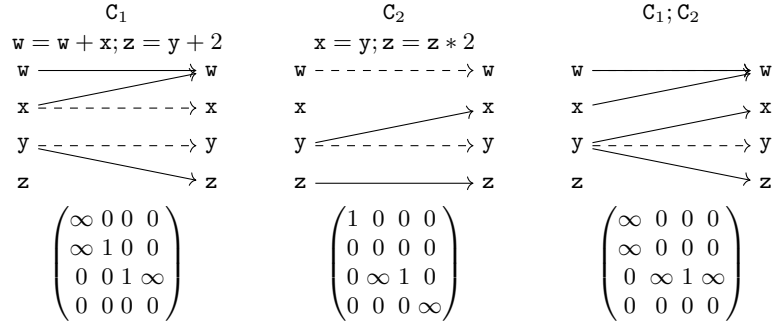
\begin{figure}%
\begin{center}
	{
	\addtolength\tabcolsep{10pt}
	\centering

	\begin{tabular}{c c c }
			\(\comm{C_{1}}\) &\(\comm{C_{2}}\) & \(\comm{C_{1}};\comm{C_2}\)\\
			$\comm{w}=\comm{w}+\comm{x};\comm{z}=\comm{y}+2$ & \(\comm{x}=\comm{y};\comm{z}=\comm{z}*2\) \\
			\begin{tikzpicture}
			\node (x0) at (1.5,0) {$\comm{w}$};
			\node (x1) at (1.5,-.5) {$\comm{x}$};
			\node (x2) at (1.5,-1) {$\comm{y}$};
			\node (x3) at (1.5,-1.5) {$\comm{z}$};
			\node (y0) at (4,0) {$\comm{w}$};
			\node (y1) at (4,-.5) {$\comm{x}$};
			\node (y2) at (4,-1) {$\comm{y}$};
			\node (y3) at (4,-1.5) {$\comm{z}$};
			\draw [->] (x0) --  (y0); 
			\draw [->] (x1) --  (y0); 
			\draw [->] (x2) --  (y3); 
			\draw [dashed, ->] (x1) --  (y1); 
			\draw [dashed, ->] (x2) --  (y2); 
			\end{tikzpicture}
			&
			\begin{tikzpicture}
			\node (x0) at (1.5,0) {$\comm{w}$};
			\node (x1) at (1.5,-.5) {$\comm{x}$};
			\node (x2) at (1.5,-1) {$\comm{y}$};
			\node (x3) at (1.5,-1.5) {$\comm{z}$};
			\node (y0) at (4,0) {$\comm{w}$};
			\node (y1) at (4,-.5) {$\comm{x}$};
			\node (y2) at (4,-1) {$\comm{y}$};
			\node (y3) at (4,-1.5) {$\comm{z}$};
			\draw [->] (x2) --  (y1); 
			\draw [->] (x3) --  (y3); 
			\draw [dashed, ->] (x0) --  (y0); 
			\draw [dashed, ->] (x2) --  (y2); 
			\end{tikzpicture}
			& 
			\begin{tikzpicture}
			\node (x0) at (1.5,0) {$\comm{w}$};
			\node (x1) at (1.5,-.5) {$\comm{x}$};
			\node (x2) at (1.5,-1) {$\comm{y}$};
			\node (x3) at (1.5,-1.5) {$\comm{z}$};
			\node (y0) at (4,0) {$\comm{w}$};
			\node (y1) at (4,-.5) {$\comm{x}$};
			\node (y2) at (4,-1) {$\comm{y}$};
			\node (y3) at (4,-1.5) {$\comm{z}$};
			\draw [->] (x0) --  (y0); 
			\draw [->] (x1) --  (y0); 
			\draw [->] (x2) --  (y1); 
			\draw [->] (x2) --  (y3); 
			\draw [dashed, ->] (x0) --  (y0); 
			\draw [dashed, ->] (x2) --  (y2); 
			\end{tikzpicture}
			\\
			\(
			\begin{pmatrix}
			\infty & 0 & 0 & 0 \\
			\infty & 1 & 0 & 0 \\
			0 & 0 & 1 & \infty \\
			0 & 0 & 0 & 0\\
			\end{pmatrix}
			\) & 
			$\begin{pmatrix}
			1 & 0 & 0 & 0 \\
			0 & 0 & 0 & 0 \\
			0 & \infty & 1 & 0 \\
			0 & 0 & 0 & \infty\\
			\end{pmatrix}
			$
			& 
			$\begin{pmatrix}
			\infty & 0 & 0 & 0 \\
			\infty & 0 & 0 & 0 \\
			0 & \infty & 1 & \infty \\
			0 & 0 & 0 & 0\\
			\end{pmatrix}
			$
		\end{tabular}
	}
\end{center}

	\caption{
		Data-Flow Graph of Composition. 
	}\label{fig:composition}
\end{figure}%

\subsubsection{Conditionals.}\label{conditional}

We define the \DFG of \prc|if| \(\comm{e}\) \prc|then| \(\comm{C_{1}}\) \prc|else| \(\comm{C_{2}}\) from
the \DFG of the commands $\comm{C_{1}}$ and $\comm{C_{2}}$. First, consider a situation 
where both commands $\comm{C_{1}}$ and $\comm{C_{2}}$ are potentially executed. 
In that case, the statement should be represented by the overapproximation 
$\dfg{C_{1}}+\dfg{C_{2}}$.
However, all the modified variables in \(\comm{C_1}\) and \(\comm{C_2}\) (\eg \(\Out(\comm{C_1}) \cup \Out(\comm{C_2})\)) depends on the variables used in \(\comm{e}\) (\eg occuring in \(\comm{e}\)).
For this reason, letting \(E\) be the vector with coefficient equal to \(\infty\) for the variables in \(\comm{e}\) and \(0\) for all the other variables,
\(O\) be the vector containing the variables in \(\Out(\comm{C_1}) \cup \Out(\comm{C_2})\), and \((\cdot)^t\) be the matrix transpose, we define $\corr{\comm{e}} = (EO)^t$, and have---as illustrated previously~\cite[Fig. 3]{Moyen2017}---:

\begin{definition}
 $\dfg{\textnormal{\prc|if| \(\comm{e}\) \prc|then| \(\comm{C_{1}}\) \prc|else| \(\comm{C_{2}}\)}}=\dfg{C_{1}}+\dfg{C_{2}}+ \corr{\comm{e}}$.
\end{definition}

\subsubsection{While Loops.}
To define the \DFG of a command \prc|while| $\comm{e}$ \prc|do| $\comm{C}$
from $\dfg{C}$, we need, as for conditionals, the \emph{loop correction} $\corr{\comm{e}}$, to account for the fact that all the modified variables in \(\comm{C}\) depends on the variables used in \(\comm{e}\):

\begin{definition}
\(\dfg{\textnormal{\prc|while| $\comm{e}$ \prc|do| $\comm{C}$}} = \dfg{C} + \corr{\comm{e}}\).
\end{definition}

This is different from our previous treatment of \prc|while| loop~\cite[Definition 5]{Moyen2017}, that required to compute the transitive closure of $\dfg{C}$: for this particular transformation, this is not needed, as all the relevant dependencies are obtained immediately---this also guarantee that loop-carried dependencies~\cite{enwiki:1080087398} do not refrain from parallelizing the body of the loops, in our analysis.

\section{Loop Fission Algorithm}
	\label{sec:algo}
\subsection{Algorithm, Presentation and Intuition}
\label{ssec:algo}

Leveraging the presented dependency analysis, we can now define the specifics of our loop transformation technique, given in \autoref{algo} and explained below.

      \begin{algorithm}%
       \caption{Loop fission}
       \label{algo}

\hspace*{\algorithmicindent/2} \textbf{Input:} loop $w = \{\texttt{while} \in \texttt{WHILE} \}$
\begin{algorithmic}
\State $vertices \leftarrow$ empty list
\State $In, Out \leftarrow$ of loop body variables
\State $parent \leftarrow $ parent statement of $w$ 
\ForAll {$in, out \in$ c($In$, $Out$)}
\If {$in $ is dependency of $out$}
   \State $vertices \mathrel{+}=$  vertex($in$, $out$) 
\EndIf
\EndFor
\ForAll {$cond \in$ loop conditions}
   \State $vertices \mathrel{+}=$  ($cond$, body stmts) 
\EndFor
\State $digraph$ $\leftarrow$ from($vertices$)
\State $sccs$ $\leftarrow$ reduce($digraph$)
\State $dag$ $\leftarrow$ condensation($digraph$, $sccs$)
\State $subgraphs$ $\leftarrow$ from $dag$
\If {$\vert subgraphs \vert > 1$}
\State $parent$ remove($w$)
\ForAll {$scc \in sccs$}
   \State $parent$ insert loop from scc 
\EndFor 
\EndIf 
\end{algorithmic}
      \end{algorithm}

 Given a loop 
$\comm{C}:= \textnormal{\prc|while| $\comm{e}$ \prc|do| $\{\comm{C}_1; \cdots; \comm{C}_n\}$}$, 
we first compute $\dfg{C_{1};\dots;C_{n}}$, and add the loop correction $\corr{\comm{e}}$. %
The dependence graph of the loop $\comm{C}$ is then defined as as the graph where the set of vertices is the set of commands $\{\comm{C}_1; \cdots; \comm{C}_n\}$, and there exists a directed edge from $\comm{C}_i$ to $\comm{C}_j$ if and only if there exists variables \(\text{\prc|x|}\in \Out(\comm{C}_j)$ and \(\text{\prc|y|}\in \In(\comm{C}_i)\) such that $\dfg{C}(\text{\prc|x|},\text{\prc|y|})=\infty$.
Note that all the commands are the sources of dependence edges whose target is the commands modifying the variables occuring in $\comm{e}$ thanks to the correction. %
      
The remainder of the loop transforming principle is simple: for each loop in the analyzed program, after evaluating data dependencies in the loop condition and variables in the loop body, it produces a graph representing the dependencies between commands; then determines the cliques in the graph and forms \emph{strongly connected components} (\sccs); the \sccs are separated into subgraphs to produce the final split loops, when applicable, and contain a copy of the loop header and update commands.

\begin{figure}
\begin{minipage}[c]{0.29\linewidth}		
\begin{algorithmic}
	\State $j \gets \textbf{pow}(10, 6)$
	\State $x_1 \gets 1$
	\State $i \gets 1$
	\While{$i \neq j$}
	\State $x_1 \gets x_1 + y_1 + x_2 + i$
	\State $y_1 \gets y_1+i$
	\State $y_2 \gets y_2\times y_1$
	\State $s[i] \gets x_1$
	\State $x_2 \gets x_1+s[i]$
	\State $u[i] \gets y_2$
	\State $t[i] \gets y_2 \times y_2$  
	\State $i \gets i+1$
	\EndWhile
	\Use{$x_1$}
\end{algorithmic}
\end{minipage}
\begin{minipage}[c]{0.7\linewidth}
	\begin{tikzpicture}[x=1.35cm,y=1.5cm]
	\node[rectangle,dotted,rounded corners=3] (1) at (-1,0) {\footnotesize{$x_1 \gets x_1 + y_1 + x_2 + i$}};
	\node[draw,rectangle,dotted,rounded corners=3] (2) at (2,0) {\footnotesize{$y_1 \gets y_1+i$}};
	\node[draw,rectangle,dotted,rounded corners=3] (3) at (2,-.8) {\footnotesize{$y_2 \gets y_2\times y_1$}};
	\node[rectangle,dotted,rounded corners=3] (4) at (-1.8,-1) {\footnotesize{$s[i] \gets x_1$}};
	\node[rectangle,dotted,rounded corners=3] (5) at (0,-1) {\footnotesize{$x_2 \gets x_1+s[i]$}};
	\node[draw,rectangle,dotted,rounded corners=3] (6) at (3,-1.5) {\footnotesize{$u[i] \gets y_2$}};
	\node[draw,rectangle,dotted,rounded corners=3] (7) at (1,-1.5) {\footnotesize{$t[i] \gets y_2 \times y_2$}};
	\node[draw,rectangle,dotted,rounded corners=3] (8) at (0,1) {\footnotesize{$i \gets i+1$}};
	
	\draw[->] (7)--(3);
	\draw[->] (1)--(2);
	\draw[->] (3)--(2);
	\draw[->] (6)--(3);
	\draw[->] (4)--(1);
	\draw[->] (5) to [bend right](1);
	\draw[->] (5)--(4);
	\draw[->] (1) to[bend right] (5) ;
	\draw[->] (1)--(8);
	\draw[->] (2)--(8);
	\draw[->] (8) to [out=60,in=120,looseness=8] (8);
	
	\draw[dotted,rounded corners=5] (-2.35,-1.15) rectangle (0.8,0.15);
	\end{tikzpicture}
\end{minipage}
\caption{\texttt{WHILE}-language program and its corresponding dependency graph.}\label{ex:prog1} 
\end{figure}
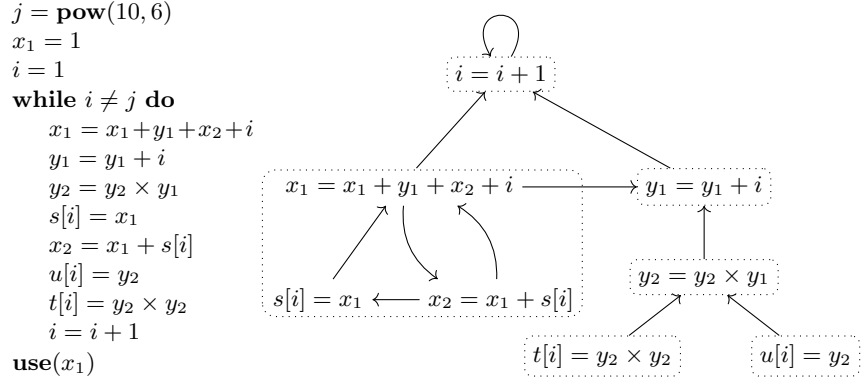

Let us consider a program and its corresponding graph, in \autoref{ex:prog1}, where strongly connected components (\sccs) are shown as dotted rectangles: the \emph{condensation graph} is the graph whose vertices are \sccs and edges are the edges whose source and target belong to distinct \sccs.
Our algorithm splits this graph into its branches, introducing duplications, and outputs a parallelizable version of the initial program as follows. It first splits the graph into branches: for this, we work not with the dependency graph directly but with the acyclic graph obtained from its decomposition into \sccs.
This graph is known as the \emph{condensation} of the initial graph.
The splitting algorithm produces automatically the corresponding program and its \emph{covering} (\autoref{def:covering}), presented in Fig. \ref{ex:optimized} and \ref{ex:optimized-bis}.

\begin{figure}
	\begin{algorithmic}
	\State $j \gets \textbf{pow}(10, 6)$
	\State $x_1 \gets 1$
	\State $i \gets 1$
	\State \prc|parallel|   \Comment{A private copy of \(i\), \(y_1\) and \(y_2\) needs to be given to each loop below.}
	\vspace{-1em}
	\begin{multicols}{3}
		\noindent $\overbrace{\hspace{10em}}$
		\While{$i \neq j$} %
		\State $x_1 \gets x_1 + y_1 + x_2 + i$
		\State $y_1 \gets y_1+i$
		\State $s[i] \gets x_1$
		\State $x_2 \gets x_1+s[i]$
		\State $i \gets i+1$
		\EndWhile
		\vspace{-.7em}
		
		\noindent $\underbrace{\hspace{10em}}$
		
		\noindent $\overbrace{\hspace{10em}}$
		\While{$i \neq j$} %
		\State $y_1 \gets y_1+i$
		\State $y_2 \gets y_2\times y_1$
		\State $u[i] \gets y_2$
		\State $i \gets i+1$
		\EndWhile
		\vspace{-.7em}
		
		\noindent $\underbrace{\hspace{10em}}$
		
		\noindent $\overbrace{\hspace{10em}}$
		\While{$i \neq j$}%
		\State $y_1 \gets y_1+i$
		\State $y_2 \gets y_2\times y_1$
		\State $t[i] \gets y_2 \times y_2$  
		\State $i \gets i+1$
		\EndWhile 
		\vspace{-.7em}
		
		\noindent $\underbrace{\hspace{10em}}$
	\end{multicols}
	\Use{$x_1$}
	\Comment{The copies of \(i\), \(y_1\) and \(y_2\) can be destroyed, all have the same values.}
\end{algorithmic}
 \caption{Transformed program}
\label{ex:optimized}
\end{figure}

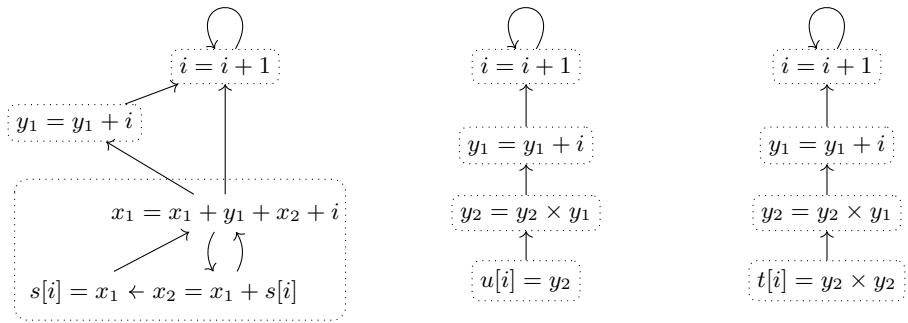
\begin{figure}
	\begin{tikzpicture}[x=2cm,y=1.5cm]
	\node[rectangle,dotted,rounded corners=3] (1) at (0,-.8) {\footnotesize{$x_1 \gets x_1 + y_1 + x_2 + i$}};
	\node[rectangle,dotted,rounded corners=3] (4) at (-1,-1.5) {\footnotesize{$s[i] \gets x_1$}};
	\node[rectangle,dotted,rounded corners=3] (5) at (0,-1.5) {\footnotesize{$x_2 \gets x_1+s[i]$}};
	\node[draw,rectangle,dotted,rounded corners=3] (8) at (0,.5) {\footnotesize{$i \gets i+1$}};
	\node[draw,rectangle,dotted,rounded corners=3] (2) at (-1,0) {\footnotesize{$y_1 \gets y_1+i$}};

	\node[draw,rectangle,dotted,rounded corners=3] (2bis) at (4,-0.2) {\footnotesize{$y_1 \gets y_1+i$}};
	\node[draw,rectangle,dotted,rounded corners=3] (8bis) at (4,.5) {\footnotesize{$i \gets i+1$}};
	\node[draw,rectangle,dotted,rounded corners=3] (3bis) at (4,-.8) {\footnotesize{$y_2 \gets y_2\times y_1$}};
	\node[draw,rectangle,dotted,rounded corners=3] (7) at (4,-1.4) {\footnotesize{$t[i] \gets y_2 \times y_2$}};
	
	\node[draw,rectangle,dotted,rounded corners=3] (6) at (2,-1.4) {\footnotesize{$u[i] \gets y_2$}};
	\node[draw,rectangle,dotted,rounded corners=3] (2ter) at (2,-0.2) {\footnotesize{$y_1 \gets y_1+i$}};
	\node[draw,rectangle,dotted,rounded corners=3] (3ter) at (2,-.8) {\footnotesize{$y_2 \gets y_2\times y_1$}};
	\node[draw,rectangle,dotted,rounded corners=3] (8ter) at (2,.5) {\footnotesize{$i \gets i+1$}};
			
	\draw[->] (7)--(3bis);
	\draw[->] (1)--(2);
	\draw[->] (3bis)--(2bis);
	\draw[->] (3ter)--(2ter);
	\draw[->] (6)--(3ter);
	\draw[->] (4)--(1);
	\draw[->] (5) to [bend right] (1);
	\draw[->] (5)--(4);
	\draw[->] (1) to [bend right] (5);
	\draw[->] (1)--(8);
	\draw[->] (2)--(8);
	\draw[->] (2bis)--(8bis);
	\draw[->] (2ter)--(8ter);
	\draw[->] (8) to [out=60,in=120,looseness=8] (8);
	\draw[->] (8bis) to [out=60,in=120,looseness=8] (8bis);
	\draw[->] (8ter) to [out=60,in=120,looseness=8] (8ter);	
	\draw[dotted,rounded corners=5] (-1.4,-.5) rectangle (0.8,-1.75);
	\end{tikzpicture}

 \caption{Transformed program---covering}
\label{ex:optimized-bis}
\end{figure}

\subsection{Correctness of the Algorithm}
\label{sec:correctness}

Formally, the transformation is described by means of particular kinds of \emph{coverings} of the dependency graph. Let us start with a formal definition.

\begin{definition}[\protect{\cite{Chung1980}}]
	\label{def:covering}
A covering of a (directed) graph $G$ is a collection of subgraphs $G_1,G_2,\dots,G_k$ such that $G=\cup_{i=1}^k G_i$.

A \emph{saturated covering} is a covering $G_1,G_2,\dots,G_k$ such that for all edge in $G$ with source in $G_i$, its target belongs to $G_i$ as well.
\end{definition}

Given a loop $\comm{C}:= \textnormal{\prc|while| $\comm{e}$ \prc|do| $\{\comm{C}_1; \cdots; \comm{C}_n\}$}$,  we explained how one can compute its dependency graph $D(\mathtt{C})$ on the set of vertices $\{1,2,\dots,n\}$ representing the commands in the body of $\mathtt{C}$. Given a saturated covering $G_1,G_2,\dots,G_k$ of the graph $G$, we can define a sequence of loops and prove that the semantics of the initial loop is preserved.
This is done by showing that for any variable $\comm{x}$ appearing in the initial loop, its final value is unchanged%
.

\begin{definition}
	Let $\comm{C}:= \textnormal{\prc|while| $\comm{e}$ \prc|do| $\{\comm{C}_1; \cdots; \comm{C}_n\}$}$  be a command, $D(\mathtt{C})$ its dependency graph, and $G_1,G_2,\dots,G_k$ a saturated covering of the graph $G$. We define $\mathtt{\tilde{C}}=\mathtt{C}^1;\dots;\mathtt{C}^k$ where the command $\mathtt{C}^j$ is defined from $G_j$ by 
	$\comm{C}^j = \textnormal{\prc|while| $\comm{e}$ \prc|do| $\{\comm{C}_{i_1}; \cdots; \comm{C}_{i_m}\}$}$ where $\{i_1,\dots,i_m\}$ is the set of vertices of $G_j$.
\end{definition}

\begin{theorem}
	\label{thm}
The transformation $\mathtt{C}\rightsquigarrow\mathtt{\tilde{C}}$ preserves the semantic.
\end{theorem}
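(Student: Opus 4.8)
The plan is to argue at the level of the operational semantics on stores, reducing the global equivalence $\comm{C}\rightsquigarrow\tilde{\comm{C}}$ to a per-branch statement proved by induction on the number of loop iterations. Write $B=\comm{C}_1;\cdots;\comm{C}_n$ for the body and, for a store $\sigma$ on which $\comm{C}$ terminates, let $\sigma=\sigma_0,\sigma_1,\dots,\sigma_N$ be the successive stores of the original loop, so that $\sigma_{t+1}$ is obtained from $\sigma_t$ by running $B$ once and $\comm{e}$ is false exactly at $\sigma_N$. For each branch $G_j$ with associated command $\comm{C}^j$, I would first establish the key lemma: run in isolation from $\sigma_0$, the loop $\comm{C}^j$ performs exactly $N$ iterations and, writing $\tau^j_0,\dots,\tau^j_N$ for its intermediate stores, $\tau^j_t$ and $\sigma_t$ agree on every variable of $\Occ(G_j):=\bigcup_{i\in G_j}\Occ(\comm{C}_i)$ (in particular on $\var(\comm{e})$). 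Note that value-level agreement is genuinely needed here, since the \DFG only over-approximates the dependence structure and says nothing about the values themselves; the semantic content must come from this operational induction.

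The induction is on $t$, the invariant $I(t)$ being the agreement of $\tau^j_t$ and $\sigma_t$ on $\Occ(G_j)$; the base case $I(0)$ is immediate as both start from $\sigma_0$. For the step I would replay the common body order through $\comm{C}^j$ and through $B$ simultaneously and check, command by command $\comm{C}_i\in G_j$, that each read variable $\comm{y}\in\In(\comm{C}_i)$ holds the same value in both executions. A read resolves either to a write earlier in the current iteration or to the value carried from $\sigma_t$; in either case the producing command $\comm{C}_q$ is linked to $\comm{C}_i$ by a dependence edge $\comm{C}_i\to\comm{C}_q$ — this is exactly what $\dfg{C}(\comm{x},\comm{y})=\infty$ records for $\comm{x}\in\Out(\comm{C}_q)$ and $\comm{y}\in\In(\comm{C}_i)$ — so saturation of the covering forces $\comm{C}_q\in G_j$, and $I(t)$ (or the portion of the iteration already replayed) supplies the matching value. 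Commands outside $G_j$ either write variables no command of $G_j$ reads, or, by the same edge, would have been dragged in; hence their omission is harmless. The termination test depends only on $\var(\comm{e})$, and the loop correction $\corr{\comm{e}}$ is precisely what makes this work: it creates an edge from every command reading a condition variable to every command writing one, so each non-trivial branch contains the full control skeleton updating $\comm{e}$ and $\comm{C}^j$ exits at the same iteration $N$ as $\comm{C}$.

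It remains to compose the branches, and this is where I expect the main obstacle to lie. By construction every variable $\comm{x}$ written by the loop is recomputed by each branch $G_j$ with $\comm{x}\in\Occ(G_j)$, and the lemma shows all such branches leave it with the common final value $\sigma_N(\comm{x})$, while single-branch variables are untouched elsewhere. The delicate point is that the literal sequential reading of $\comm{C}^1;\cdots;\comm{C}^k$ would let an early branch drive a shared control variable (\eg the index occurring in $\comm{e}$) to its terminal value, thereby starving later branches of their iterations; this is exactly why the realized transformation uses private copies of the shared variables (\autoref{ex:optimized}). I would therefore make the privatization explicit in $\tilde{\comm{C}}$, observe that the branches then operate on disjoint state apart from their common read-only initial values, and invoke the uniform-update property — every branch recomputes the shared variables identically — to show that the reconciliation (whether by a final sequential gather or by the last-writer semantics of \texttt{parallel}) is well defined and returns $\sigma_N$ on all variables. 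Treating the non-terminating case symmetrically (a branch diverges iff the original loop does, again because the control skeleton is shared) then yields full semantic preservation.
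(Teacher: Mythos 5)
Your proposal is correct and follows essentially the same route as the paper's own proof sketch: a per-branch claim that the variables occurring in each $\comm{C}^j$ evolve identically under $\comm{C}^j$ and under $\comm{C}$, proved by induction on the number of iterations and justified by the saturation of the covering (any command producing a value read in $G_j$ must itself lie in $G_j$), followed by reconciliation of the branches under the locality/privatization convention for \texttt{parallel}. You merely make explicit what the paper leaves implicit --- the store-level formulation, the synchronized exit at iteration $N$ via $\corr{\comm{e}}$, and the privatization needed so that sequentially or concurrently composing the branches does not starve later loops of their iterations.
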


\begin{proof}[Proof (sketch)]
We show that for every variable $\comm{x}$, the value of $\comm{x}$ after the execution of $\mathtt{C}$ is equal to the value of $\comm{x}$ after the execution of $\mathtt{\tilde{C}}$. Variables are considered local to each loop $\mathtt{C}^j$ in $\mathtt{\tilde{C}}$, so we need to avoid race condition. To do so, we prove the following more precise result: for each variable $\comm{x}$ and each loop $\mathtt{C}^j$ in $\mathtt{\tilde{C}}$ in which the value of $\comm{x}$ is modified, the value of $\comm{x}$ after executing $\mathtt{C}$ is equal to the value of $\comm{x}$ after executing $\mathtt{C}^j$.

The previous claim is then straightforward to prove, based on the property of the covering. One shows by induction on the number of iterations $k$ that for all the variables $x_1,\dots,x_h$ appearing in $\mathtt{C}^j$, the values of $x_1,\dots,x_h$ after $k$ loop iterations of $\mathtt{C}^j$ are equal to the values of $x_1,\dots,x_h$ after $k$ loop iterations of $\mathtt{C}$. Note some other variables may be affected by the latter but the variables $x_1,\dots,x_h$ do not depend on them (otherwise, they would also appear in $\mathtt{C}^j$ by definition of the dependence graph and the covering).
\end{proof}

\section{Experimental Results} 
\label{sec:quality}

This section aims at experimentally substantiating two claims:
\begin{enumerate}
\item Our algorithm can parallelize loops that are completely ignored by---to our knowledge (\autoref{app:sec:comparison})---all the other automatic parallelization tools, and result in appreciable gain (\autoref{sec:casestudy}),
\item Our code transformation provides a gain similar to the automatic parallelization tool \texttt{AutoPar-Clava}---which \enquote{compare\textins{s} favorably with closely related auto-parallelization compilers}~\cite[p.~1]{Arabnejad2020}---when both are applicable, and can be integrated in automatic parallelization pipelines (\autoref{ssec:par-techniques}).
\end{enumerate}

Taken together, those results confirm that our original loop fission can easily be integrated in pre-existing tools and improve the performances of the resulting code.
We begin by briefly explaining our benchmarking strategy.

\subsection{Benchmarking Strategy}

The primary goal of our benchmarking strategy was to evaluate the potential performance gain using the described algorithm and parallelization.
The PolyBench/C suite~\cite{polybenchc} was selected for this purpose because it contains programs in the \texttt{C} programming language, which naturally maps to the syntax of \texttt{WHILE} presented in \autoref{sec:background}, the \texttt{parallel} command being represented as OpenMP directives. The suite has been  crafted to offer different opportunities for loop transformations, and comes with built-it timing utilities, which we use, to obtain accurate and comparable results. One drawback is the suite's sole focus on \enquote{canonical} \texttt{for} loops, which prevented evaluating transformations of other kinds of loops that our algorithm can split.
This issue was resolved by introducing an additional case study, measuring  the performance of parallelized \prc|while| loops%
, discussed in \autoref{sec:casestudy}.

To avoid evaluation of non-transformable programs, the suite was then reduced to the programs that were either suitable for loop fission or already had the intended form. There were 6 such programs. Next these programs were transformed manually to their post-fission form. Since the proposed technique also involves parallelization, we defined two parallelization strategies: one where OpenMP directives were inserted manually, and another using an automated parallelizing source-to-source compiler, \texttt{AutoPar-Clava}~\cite{Arabnejad2020}. The two approaches are complementary and serve different purposes: the manual method enables finding optimal directives; the automatic approach shows these tools can be pipelined to obtain fully automatic parallelizing compilation toolchain.

The PolyBench/C benchmark timing script runs each program 5 times, taking the average of 3 runs. The script also measure variance between the averaged times, which in this study was constrained never to exceed 5\%.
Speedup is the ratio of sequential and parallel executions, $S = T_{\text{Seq}}/T_{\text{Par}}$, where a value greater than 1 indicates parallel is outperforming the sequential execution. In presentation of these results, the original sequential programs are always considered the baseline.

The benchmarks were ran on multiple \texttt{gcc} compiler optimization levels (\texttt{O0-O3}), and data sizes supplied with the suite (\texttt{MINI} - \texttt{EXTRALARGE}) using a Linux 4.19.0-20-amd64 \#1 SMP Debian 4.19.235-1 (2022-03-17) x86\_64 GNU/Linux  machine, with 4 Intel(R) Core(TM) i5-6300U CPU @ 2.40GHz processors.
Our open source benchmarking is available at \href{https://github.com/statycc/icc-fission}{https://github.com/statycc/icc-fission}.

\subsection{Case Study: Parallel \prc|while| Loops}
\label{sec:casestudy}

The agnostic treatment of the various kinds of loops, including loops with unknown iteration spaces, are some of the highlights of the presented technique. It is however difficult to compare this approach to other techniques, because most loop transformation and parallelization tools focus only on \prc|for| loops and PolyBench/C does not include \prc|while| loop programs. The difficulty parallelizing \prc|while| loops arises from the need to synchronize evaluation of the loop recurrence and termination condition, with improper synchronization resulting in overshooting the iterations~\cite{Rauchwerger1995}; rendering such loops effectively sequential.
Our technique addresses this challenge by recognizing the independence between loops resulting from loop fission, thus producing parallelizable loop chains. A good candidate for demonstrating this concept is the benchmark program \href{https://github.com/statycc/icc-fission/blob/2c4f131ccfcce1332e08ed7af0f7cfd98737d546/original/3mm.c}{\texttt{3mm}}, from which we constructed a semantically equivalent program using \prc|while| loops, 
\href{https://github.com/statycc/icc-fission/blob/2c4f131ccfcce1332e08ed7af0f7cfd98737d546/case_study-b/3mm_while.c}{\texttt{3mm\_while}}. %

In general, special care is needed when inserting parallelization directives for loops with unknown iteration spaces.
Use of \texttt{single} directive prevents overshooting the loop termination condition and need for synchronization between threads, enabling parallel execution by multiple threads on individual loop statements. \autoref{fig:while} demonstrates  this strategy yields a consistent speedup, with geometric mean of 1.8 across data sizes and \texttt{gcc} compiler optimization levels.
While this is expectedly slightly lower than speedup of 3.1 obtained on an equivalent \prc|for| loop program---texttt{3mm} is in \autoref{tab:speedup}---this is an encouraging result because \prc|while| loops are the most demanding types of loops to optimize, and generally completely ignored.
Similar examples illustrating that our analysis can apply to \eg loop-carrying dependency loops or loops whose iteration space is not known at compilation time could be similarly crafted.

\begin{figure}
	\begin{minipage}[c]{0.49\linewidth}
		\includegraphics[width=.9\linewidth]{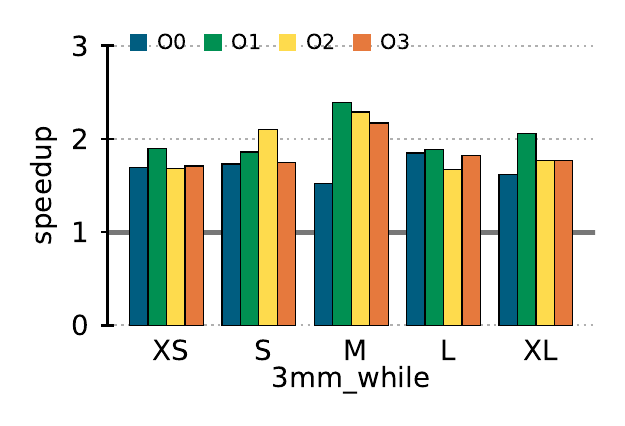}
	\end{minipage}
	\begin{minipage}[c]{0.49\linewidth}
		\begin{center}
			\begin{tabular}{| c || c | c | c | c |} \hline
   Size &  O0   &  O1   &  O2  & O3   \\ \hline \hline
               XS   & \gradient{1.69} & \gradient{1.90} & \gradient{1.68} & \gradient{1.71} \\ \hline
               S    & \gradient{1.73} & \gradient{1.86} & \gradient{2.10} & \gradient{1.75} \\ \hline
               M    & \gradient{1.52} & \gradient{2.39} & \gradient{2.29} & \gradient{2.17} \\ \hline
               L    & \gradient{1.85} & \gradient{1.89} & \gradient{1.67} & \gradient{1.82} \\ \hline
               XL   & \gradient{1.62} & \gradient{2.06} & \gradient{1.77} & \gradient{1.77} \\ \hline
\end{tabular}

		\end{center}
	\end{minipage}
	\caption{Speedup of \texttt{3mm} program implemented using \prc|while| loops.}
	\label{fig:while}
\end{figure} 

\subsection{PolyBench Results}
\label{ssec:results}

\begin{figure}%
\includegraphics[width=1\linewidth]{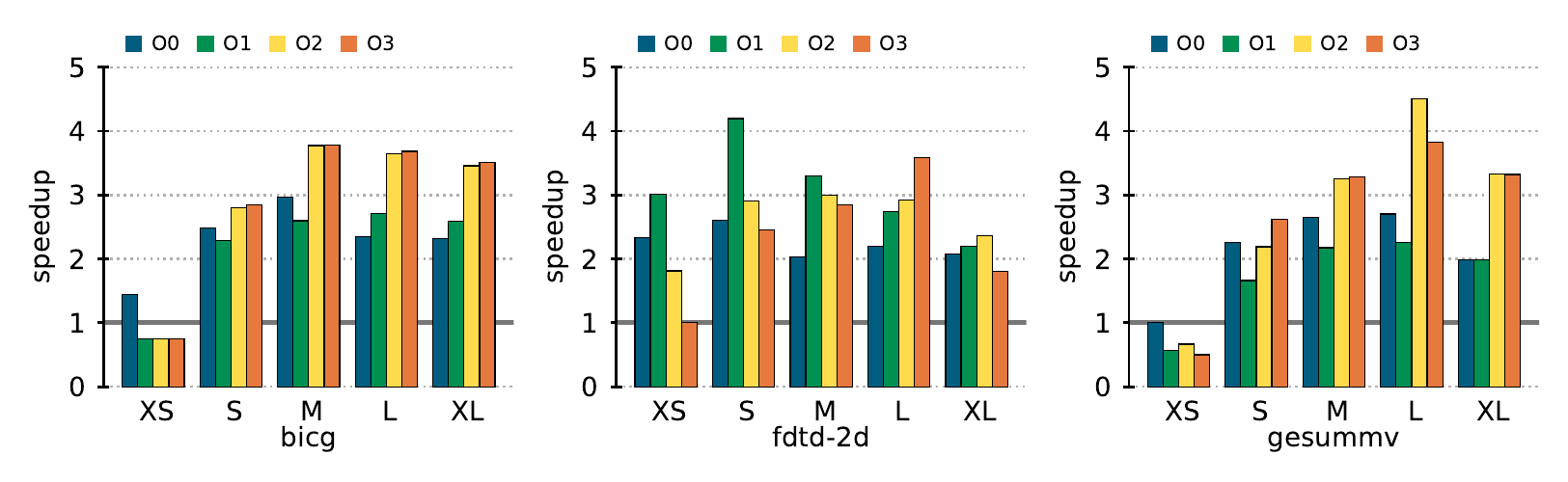}
\caption{Speedup obtained on Polybench/C benchmarks, after loop fission and manual parallelization, on different data sizes and compiler optimization levels.}
\label{fig:speedup}
\end{figure} 

\autoref{fig:speedup} visualizes the speedup obtained on \href{https://github.com/statycc/icc-fission/tree/d684ee4546d9d73d379fa7fb5763c4822945b126/fission_manual}{manually parallelized and transformed programs}.
Included in the figure are only those programs to which loop fission could be applied.
Parallelization directives were applied only to outer loops to reduce parallelization overhead.
Loop chains without flow dependencies were placed inside a \prc|parallel| block and parallelized using \prc|nowait|. Shared variables were marked \prc|private| and \prc|reduction| clauses were used when applicable.

While the performance gains are obvious, a few cases require explanation. Speedup is not always obtained on very small data sizes (\texttt{bicg}, \texttt{gesummv}), because these are very fast programs---less than 0.01 ms in clock time---with low iterations counts, and parallel synchronization produces more overhead to not result in gain.  Nonetheless, any program implemented without parallelism in mind would obtain potential speedup, equalling up to the number of available processors.

\subsubsection{Manual vs. Automatic Parallelization}
\label{ssec:par-techniques}

Finally, we compared three parallelization strategies:
\begin{enumerate*}
	\item applying \texttt{AutoPar-Clava} to the original code,
	\item applying \texttt{AutoPar-Clava} to the code as transformed by our algorithm,
	\item inserting manually the OpenMP directive in the code resulting from our algorithm.
	\end{enumerate*}
Indeed, while \texttt{Autopar-Clava} does not perform loop transformations, it offers a viable alternative for placing the parallelization directives automatically. 
Our results, presented in \autoref{tab:speedup} (in \autoref{app:par-techniques}) makes it evident that all 3 alternative approaches offer comparable results\footnote{Except in the case of \texttt{fdtd-2d}, \href{https://github.com/specs-feup/specs-lara/issues/1}{where the automatic tool was unable to  produce safe parallelization directives}, and no timing result could be obtained.}.
This illustrates that, when applicable, our technique produces results \enquote{as good as} a state-of-the-art parallelization tool, and that it can actually be leveraged in its flow.

\section{Conclusion}
\label{sec:conclusion}

By reasoning about programs at a high-level and in terms of graphs, we obtained an adaptable and correct code transformation, that can be used in a large variety of situations---insensible to the actual tools or programming language provided it is in an imperative style---and that offers notable gain on optimizations that are frequently overlooked.
Particularly, the ability to reason about loops with unknown iteration spaces or loop-carried dependencies is significant, as this property is not supported in the described form by similar existing tools (\autoref{app:sec:comparison}). 
Furthermore, our code transformation is lightweight, automatable, suitable to various forms of implementations, and proven correct. %

From there, multiple perspectives are open: integrating our method in existing tools should not raise difficulties, but will require cost-benefit analysis along, at least, two axes.
The first one is to determine if splitting the loop is performed in the correct type of environment: since \eg parallelizing only the inner-most loop with OpenMP is detrimental to performances~\cite[Chapter 3, Nested]{ppc2022}, taking into account the tool's limitations will be crucial to result in actual gains.
Second, the cost of duplication some commands can void the gain obtained from parallelizing some loops: luckily, our cost analysis~\cite{Aubert2022b} functions on the same imperative language, re-uses some of the tools introduced here, and could help in evaluating the degree of splitting in terms of cost-benefit analysis.
Last but not least, \autoref{thm} would still be valid for this \enquote{conditional} loop-splitting algorithm: further discussion and an extended example including these considerations can be found in the appendix, \autoref{app:sec:cost-benefit}.

\subsubsection*{Acknowledgments}
The authors wish to express their gratitude to \href{https://github.com/joaobispo}{João Bispo} for \href{https://github.com/specs-feup/clava/issues/58}{explaining how to integrate \texttt{AutoPar-Clava}} in \href{https://github.com/statycc/icc-fission}{their benchmark}.

\bibliographystyle{splncs04}
\bibliography{bib/bib.bib}

\clearpage
\appendix

\section{Limitations of Existing Automatic Parallelization Tools}
\label{app:sec:comparison}

We focus here on presenting the types of loop that other \enquote{popular}~\cite{Prema2019} auto-parallelization frameworks for \texttt{C} \emph{cannot} parallelize but that our algorithm could split.
In particular, we do not discuss loops containing function calls that have side effects or control-flow modifiers (such as \prc|break;| or \prc|continue;|): neither our algorithm nor the underlying dependency mechanisms of the discussed tools---to the best of our knowledge---can accommodate those.

Most tools can process only \enquote{canonical loops}, defined \eg in OpenMP's specification~\cite[4.4.1 Canonical Loop Nest Form]{OARB21}: essentially, their structure is of the form \prc|for (init-expr; test-expr; incr-expr) structured-block|, with \prc|incr-expr| being a (single) increment or decrement by a constant or a variable, and \prc|test-expr| being a single comparison between a variable and a variable or a constant.
Additional constraints on loop dependences are sometimes needed, \eg the absence of loop-carried dependency~\cite{enwiki:1080087398} for cetus.

It is always hard to infer the absence of support, but we evaluated the lack of formal discussion or example of \eg \prc|while| loop to be sufficient to determine that the tool could not process \prc|while| loops, unless of course they can trivially be transformed into \prc|for| loops of the required form~\cite[p. 236]{Vitorovi2014}.
We refer to a recent study~\cite[Section 2]{Prema2019} for more detail on those notions and on the limitations of some of the tools discussed below.

It seems further that some tools cannot parallelize loops whose body contains \eg \prc|if| or \prc|switch| statements~\cite[p. 18]{Prema2019}, but we have not investigated this claim further: however, our algorithm can handle \prc|if|---and \prc|switch| too, if it was part of our syntax---present in the body of the loop seamlessly.

\begin{tabular}{| c || c | c | c | }
\hline
Name & \prc|for| loop & \prc|while| loop & \prc|do| \dots \prc|while| loop \\
\hline 
\hline
cetus & In canonical form\textsuperscript{$*$} & \multicolumn{2}{c|}{No}\\
\hline
Par4all  & \multicolumn{3}{c |}{Unknown\textsuperscript{$\dagger$}}\\ 
\hline 
ROSE  & In canonical form\textsuperscript{$\mathsection$} & \multicolumn{2}{c | }{No} \\
\hline
icc & \multicolumn{2}{c |}{Only if countable\textsuperscript{$\ddagger$}} & No\\ 
\hline 
PPL & \multicolumn{3}{c | }{ Unclear\textsuperscript{$\mathparagraph$} }\\
\hline 
Openmp & In canonical loop nest form\textsuperscript{$**$} & \multicolumn{2}{c |}{No} \\
\hline
AutoparClava & \multicolumn{3}{c | }{ Same limitations as openmp} \\
\hline
\end{tabular}
 
\begin{description}
	\item[$*$] 
	\begin{pquotation}{\cite[p.~39]{Dave2009}}
		\textins{cetus} currently handles all canonical loops of the form
		
		\prc|for(i = lb; i < ub; i + = inc)|
	\end{pquotation}
	\begin{pquotation}{\cite[p.~761]{Bae2013}}
		 \textins{a} loop is marked as parallel if no scalar variable carries dependences and all dependence arcs in the graph show non-loop-carried dependences with respect to the loop.
	\end{pquotation}
	\item[$\dagger$] Unfortunately, the project's documentation is \href{http://par4all.github.io/documentation.html\#users-guide}{currently not accessible} and the publications related to this project~\cite{amini2012,amini} do not discuss loop limitations.
	\item[$\mathsection$] 
	Even if the manual boldly claims
	\begin{pquotation}{\cite[p.~123]{Rose}}
		The implementation can successfully optimize arbitrary loop structures, including complex, non-perfect loop nest
	\end{pquotation}
	it later on specifies:
	\begin{pquotation}{\cite[p.~124]{Rose}}
	\textins{Rose} utilizes traditional techniques developed to optimize loop nests in Fortran programs. When optimizing C or C++ applications, this package only recognizes and optimizes a particular for-loop that corresponds to the DO loop construct in Fortran programs. Within the ROSE source-to-source compiler infrastructure, such a loop is defined to have the following formats: 
	
	\prc|for (i = lb; i <= ub; i+ = positiveStep)|
	
	or \prc|for (i = ub; i >= lb; i+ = negativeStep)|
	\end{pquotation} 
		
	\item[$\ddagger$] 
	Loops can be formed with the usual \prc|for| and \prc|while| constructs, provided	the loop iteration is \emph{countable}:
	\begin{pquotation}{\cite[p. 2126]{icc_manual}}
		The loop iterations must be countable; in other words, the number of iterations must be expressed as one of the following: 
		\begin{itemize}
			\item A constant.
			\item A loop invariant term.
			\item A linear function of outermost loop indices. 
		\end{itemize}
		In the case where a loops exit depends on computation, the loops are not countable.
	\end{pquotation}
\item[$\mathparagraph$] The documentation~\cite{tylermsft_parallel_nodate} does not discuss which types of loop are supported clearly, but this tool seems to support only \prc|for| (and \prc|for each|) C++ loops.
\item[$**$] For more detail, refer to OpenMP's documentation~\cite{OARB21}. In short, 
\begin{pquotation}{\cite[Section 4.4.2]{OARB21}}
	The canonical loop nest form allows the iteration count of all associated loops to be computed before executing the outermost loop.
\end{pquotation}
\end{description}

\section{Benchmarking Parallel Versions of PolyBench/C}
\label{app:par-techniques}

\autoref{tab:speedup} presents the gain obtained by parallelizing six different programs from the PolyBench/C suite, for different levels of optimization and data sizes, for the following 4 program categories:

\begin{enumerate}
	\item original: unmodified and sequential programs (used as a base to measure the speedup),
	\item original-autopar: original programs, where OpenMP directives are automatically inserted by \texttt{AutoPar-Clava}~\cite{Arabnejad2020},
	\item fission-manual: programs after loop fission, where OpenMP directives are inserted manually,
	\item fission-autopar: programs after loops fission, where OpenMP directives are automatically inserted by \texttt{AutoPar-Clava}.
\end{enumerate}

\begin{table}
  \centering
		\noindent
\resizebox{1\textwidth}{!}{
\renewcommand{\arraystretch}{1.2}
\begin{tabular}{| p{1.3cm} >{\centering\arraybackslash}p{.7cm} || >{\centering\arraybackslash}p{.7cm} | >{\centering\arraybackslash}p{.7cm} | >{\centering\arraybackslash}p{.7cm} | >{\centering\arraybackslash}p{.7cm} | >{\centering\arraybackslash}p{.7cm} | >{\centering\arraybackslash}p{.7cm} | >{\centering\arraybackslash}p{.7cm} | >{\centering\arraybackslash}p{.7cm} | >{\centering\arraybackslash}p{.7cm} | >{\centering\arraybackslash}p{.7cm} | >{\centering\arraybackslash}p{.7cm} | >{\centering\arraybackslash}p{.7cm} | } \hline
    \multicolumn{2}{|l||}{Benchmark}  & \multicolumn{3}{c|}{O0} & \multicolumn{3}{c|}{O1} & \multicolumn{3}{c|}{O2}  & \multicolumn{3}{c|}{O3}  \\ \hline
     Name  &   Data \newline size   & orig. \newline auto & fiss. \newline auto & fiss. \newline man & orig. \newline auto & fiss. \newline auto & fiss. \newline man & orig. \newline auto & fiss. \newline auto & fiss. \newline man &  orig. \newline auto & fiss. \newline auto & fiss. \newline man \\  \hline\hline
    3mm     & XS   & \gradient{2.28} & \gradient{2.29} & \gradient{2.31} & \gradient{2.46} & \gradient{2.50} & \gradient{2.75} & \gradient{1.73} & \gradient{1.73} & \gradient{1.98} & \gradient{1.81} & \gradient{1.96} & \gradient{2.04} \\ \hline
            & S    & \gradient{2.76} & \gradient{2.79} & \gradient{2.58} & \gradient{3.91} & \gradient{3.92} & \gradient{3.95} & \gradient{4.19} & \gradient{4.19} & \gradient{4.21} & \gradient{3.42} & \gradient{3.41} & \gradient{3.47} \\ \hline
            & M    & \gradient{2.24} & \gradient{2.23} & \gradient{2.23} & \gradient{3.63} & \gradient{3.61} & \gradient{3.63} & \gradient{3.42} & \gradient{3.40} & \gradient{3.43} & \gradient{3.45} & \gradient{3.44} & \gradient{3.47} \\ \hline
            & L    & \gradient{3.56} & \gradient{3.49} & \gradient{3.48} & \gradient{3.97} & \gradient{3.85} & \gradient{3.91} & \gradient{4.16} & \gradient{4.05} & \gradient{3.98} & \gradient{4.40} & \gradient{4.45} & \gradient{4.44} \\ \hline
            & XL   & \gradient{2.35} & \gradient{2.25} & \gradient{2.31} & \gradient{3.96} & \gradient{4.06} & \gradient{3.76} & \gradient{2.92} & \gradient{2.87} & \gradient{2.88} & \gradient{2.76} & \gradient{2.78} & \gradient{2.85} \\ \hline
    bicg    & XS   & \gradient{0.53} & \gradient{0.45} & \gradient{1.44} & \gradient{0.23} & \gradient{0.23} & \gradient{0.75} & \gradient{0.23} & \gradient{0.20} & \gradient{0.75} & \gradient{0.28} & \gradient{0.26} & \gradient{0.75} \\ \hline
            & S    & \gradient{2.08} & \gradient{1.82} & \gradient{2.48} & \gradient{1.68} & \gradient{1.37} & \gradient{2.28} & \gradient{1.57} & \gradient{1.45} & \gradient{2.80} & \gradient{1.73} & \gradient{1.69} & \gradient{2.85} \\ \hline
            & M    & \gradient{3.30} & \gradient{2.84} & \gradient{2.96} & \gradient{3.56} & \gradient{2.43} & \gradient{2.60} & \gradient{4.10} & \gradient{3.37} & \gradient{3.77} & \gradient{4.20} & \gradient{3.48} & \gradient{3.78} \\ \hline
            & L    & \gradient{2.74} & \gradient{2.34} & \gradient{2.35} & \gradient{3.96} & \gradient{2.63} & \gradient{2.71} & \gradient{4.54} & \gradient{3.63} & \gradient{3.64} & \gradient{4.56} & \gradient{3.68} & \gradient{3.68} \\ \hline
            & XL   & \gradient{2.71} & \gradient{2.30} & \gradient{2.32} & \gradient{3.77} & \gradient{2.60} & \gradient{2.59} & \gradient{4.27} & \gradient{3.46} & \gradient{3.46} & \gradient{4.30} & \gradient{3.50} & \gradient{3.50} \\ \hline
    deriche & XS   & \gradient{2.03} & \gradient{2.08} & \gradient{1.63} & \gradient{2.24} & \gradient{2.19} & \gradient{2.26} & \gradient{2.36} & \gradient{2.54} & \gradient{2.29} & \gradient{2.20} & \gradient{2.22} & \gradient{2.35} \\ \hline
            & S    & \gradient{2.33} & \gradient{2.28} & \gradient{1.76} & \gradient{2.20} & \gradient{2.33} & \gradient{1.94} & \gradient{2.42} & \gradient{2.53} & \gradient{2.01} & \gradient{2.34} & \gradient{2.29} & \gradient{2.08} \\ \hline
            & M    & \gradient{2.73} & \gradient{2.74} & \gradient{1.96} & \gradient{3.03} & \gradient{3.05} & \gradient{2.44} & \gradient{3.10} & \gradient{3.20} & \gradient{2.38} & \gradient{2.95} & \gradient{3.03} & \gradient{2.56} \\ \hline
            & L    & \gradient{1.73} & \gradient{1.73} & \gradient{1.39} & \gradient{1.70} & \gradient{1.55} & \gradient{1.72} & \gradient{1.75} & \gradient{1.75} & \gradient{1.65} & \gradient{1.72} & \gradient{1.71} & \gradient{1.86} \\ \hline
            & XL   & \gradient{0.95} & \gradient{0.95} & \gradient{0.89} & \gradient{0.84} & \gradient{0.84} & \gradient{0.84} & \gradient{0.86} & \gradient{0.86} & \gradient{0.84} & \gradient{0.86} & \gradient{0.86} & \gradient{0.87} \\ \hline
    fdtd-2d & XS   & \gradient{2.17} & \- & \gradient{2.33} & \gradient{2.13} & \- & \gradient{3.01} & \gradient{1.42} & \- & \gradient{1.81} & \gradient{0.77} & \- & \gradient{1.00} \\ \hline
            & S    & \gradient{2.60} & \- & \gradient{2.60} & \gradient{3.47} & \- & \gradient{4.20} & \gradient{2.28} & \- & \gradient{2.90} & \gradient{1.59} & \- & \gradient{2.45} \\ \hline
            & M    & \gradient{1.93} & \- & \gradient{2.03} & \gradient{1.39} & \- & \gradient{3.29} & \gradient{1.27} & \- & \gradient{3.00} & \gradient{0.80} & \- & \gradient{2.85} \\ \hline
            & L    & \- & \- & \gradient{2.20} & \- & \- & \gradient{2.74} & \- & \- & \gradient{2.92} & \- & \- & \gradient{3.58} \\ \hline
            & XL   & \- & \- & \gradient{2.07} & \- & \- & \gradient{2.20} & \- & \- & \gradient{2.36} & \- & \- & \gradient{1.80} \\ \hline
    gesummv & XS   & \gradient{1.50} & \gradient{0.90} & \gradient{1.00} & \gradient{1.00} & \gradient{0.57} & \gradient{0.57} & \gradient{1.00} & \gradient{0.57} & \gradient{0.67} & \gradient{0.75} & \gradient{0.43} & \gradient{0.50} \\ \hline
            & S    & \gradient{2.71} & \gradient{2.33} & \gradient{2.26} & \gradient{2.58} & \gradient{1.55} & \gradient{1.66} & \gradient{2.36} & \gradient{2.04} & \gradient{2.19} & \gradient{2.82} & \gradient{2.62} & \gradient{2.62} \\ \hline
            & M    & \gradient{3.07} & \gradient{2.80} & \gradient{2.65} & \gradient{3.27} & \gradient{2.17} & \gradient{2.17} & \gradient{3.19} & \gradient{3.25} & \gradient{3.25} & \gradient{3.29} & \gradient{3.29} & \gradient{3.28} \\ \hline
            & L    & \gradient{3.08} & \gradient{2.85} & \gradient{2.70} & \gradient{3.48} & \gradient{1.95} & \gradient{2.26} & \gradient{4.23} & \gradient{4.51} & \gradient{4.51} & \gradient{3.56} & \gradient{3.78} & \gradient{3.83} \\ \hline
            & XL   & \gradient{2.27} & \gradient{2.10} & \gradient{1.99} & \gradient{3.16} & \gradient{2.02} & \gradient{1.99} & \gradient{3.16} & \gradient{3.32} & \gradient{3.32} & \gradient{3.15} & \gradient{3.31} & \gradient{3.32} \\ \hline
    mvt     & XS   & \gradient{1.73} & \gradient{1.73} & \gradient{2.17} & \gradient{1.43} & \gradient{1.43} & \gradient{2.00} & \gradient{0.83} & \gradient{0.83} & \gradient{1.25} & \gradient{0.83} & \gradient{1.00} & \gradient{1.25} \\ \hline
            & S    & \gradient{2.73} & \gradient{2.75} & \gradient{2.85} & \gradient{3.64} & \gradient{3.56} & \gradient{3.56} & \gradient{2.54} & \gradient{2.49} & \gradient{2.85} & \gradient{2.51} & \gradient{2.38} & \gradient{2.71} \\ \hline
            & M    & \gradient{3.10} & \gradient{3.06} & \gradient{3.44} & \gradient{4.33} & \gradient{4.33} & \gradient{4.38} & \gradient{3.14} & \gradient{3.14} & \gradient{3.16} & \gradient{3.15} & \gradient{3.19} & \gradient{3.20} \\ \hline
            & L    & \gradient{2.28} & \gradient{2.30} & \gradient{2.22} & \gradient{3.53} & \gradient{2.97} & \gradient{3.49} & \gradient{2.53} & \gradient{2.50} & \gradient{2.51} & \gradient{2.51} & \gradient{2.57} & \gradient{2.50} \\ \hline
            & XL   & \gradient{1.41} & \gradient{1.35} & \gradient{1.32} & \gradient{2.28} & \gradient{1.99} & \gradient{2.38} & \gradient{1.33} & \gradient{1.53} & \gradient{1.52} & \gradient{1.55} & \gradient{1.48} & \gradient{1.53} \\ \hline
\end{tabular}
}

\caption{Comparing speedup between original sequential programs and transformed parallel programs, for various data sizes and compiler optimization levels.}\label{tab:speedup}
\end{table}

\section{Cost-Benefit Considerations for Loop Fission}
\label{app:sec:cost-benefit}

For the example transformations presented in \autoref{ssec:algo}, it may be that duplicating the command $y_2 \gets y_2 \times y_1$ degrades the speed gain of the parallelization to the point that it is not worth splitting the loop.
Based on a more precise dependency analysis, such as the \textsc{mwp}-analysis~\cite{Aubert2022b}, which produces a weighted dependency graph whose weights \emph{provide quantitative information} on the dependency (linear, weak polynomial, polynomial, super-polynomial). The splitting algorithm will be adapted to use this information in order to allow for adaptive splittings. In the above case, one may then obtain the cover and corresponding algorithm presented in \autoref{ex:optimized-mwp}.
Note that \autoref{thm} would still apply to this \enquote{conditional} loop-splitting algorithm.

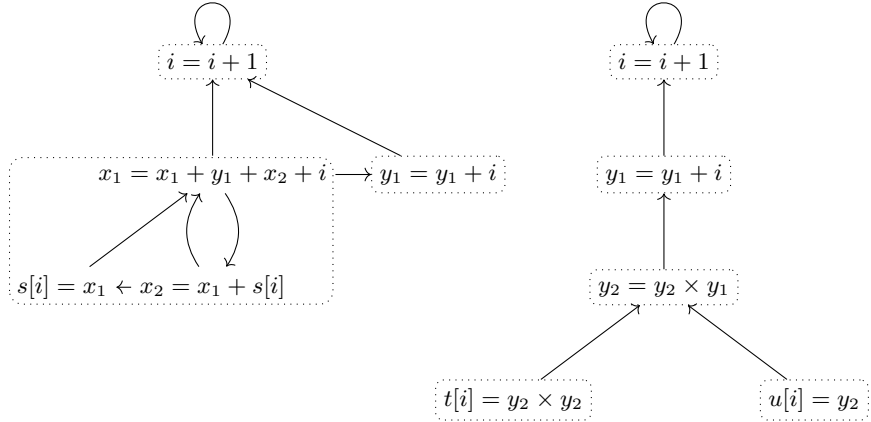
\begin{figure}
	\begin{algorithmic}
		\State $j \gets \textbf{pow}(10, 6)$
		\State $x_1 \gets 1$
		\State $i \gets 1$
		\State \prc|parallel|  \Comment{A private copy of \(i\) and \(y_1\) needs to be given to each loop below.}
		\vspace{-1em}
		\begin{multicols}{2}
			\noindent $\overbrace{\hspace{10em}}$
			\While{$i \neq j$}
			\State $x_1 \gets x_1 + y_1 + x_2 + i$
			\State $y_1 \gets y_1+i$
			\State $s[i] \gets x_1$
			\State $x_2 \gets x_1+s[i]$
			\State $i \gets i+1$
			\EndWhile
			\vspace{-.7em}
			
			\noindent $\underbrace{\hspace{10em}}$
			
			\noindent $\overbrace{\hspace{10em}}$
			\While{$i \neq j$}%
			\State $y_1 \gets y_1+i$
			\State $y_2 \gets y_2\times y_1$
			\State $u[i] \gets y_2$
			\State $t[i] \gets y_2 \times y_2$  
			\State $i \gets i+1$
			\EndWhile
			\vspace{-.7em}
			
			\noindent $\underbrace{\hspace{10em}}$
		\end{multicols}
		\Use{$x_1$}
		\Comment{The copies of \(i\) and \(y_1\) can be destroyed, all have the same values.}
	\end{algorithmic}

	\begin{tikzpicture}[x=2cm,y=1.5cm]
	\node[rectangle,dotted,rounded corners=3] (1) at (0,0) {\footnotesize{$x_1 \gets x_1 + y_1 + x_2 + i$}};
	\node[draw,rectangle,dotted,rounded corners=3] (2) at (1.5,0) {\footnotesize{$y_1 \gets y_1+i$}};
	\node[draw,rectangle,dotted,rounded corners=3] (2bis) at (3,0) {\footnotesize{$y_1 \gets y_1+i$}};
	\node[draw,rectangle,dotted,rounded corners=3] (3) at (3,-1) {\footnotesize{$y_2 \gets y_2\times y_1$}};
	\node[rectangle,dotted,rounded corners=3] (4) at (-1,-1) {\footnotesize{$s[i] \gets x_1$}};
	\node[rectangle,dotted,rounded corners=3] (5) at (0,-1) {\footnotesize{$x_2 \gets x_1+s[i]$}};
	\node[draw,rectangle,dotted,rounded corners=3] (6) at (4,-2) {\footnotesize{$u[i] \gets y_2$}};
	\node[draw,rectangle,dotted,rounded corners=3] (7) at (2,-2) {\footnotesize{$t[i] \gets y_2 \times y_2$}};
	\node[draw,rectangle,dotted,rounded corners=3] (8) at (0,1) {\footnotesize{$i \gets i+1$}};
	\node[draw,rectangle,dotted,rounded corners=3] (8bis) at (3,1) {\footnotesize{$i \gets i+1$}};
	
	\draw[->] (7)--(3);
	\draw[->] (1)--(2);
	\draw[->] (3)--(2bis);
	\draw[->] (6)--(3);
	\draw[->] (4)--(1);
	\draw[->] (5) .. controls (-0.2,-0.6) and (-0.2,-0.4) .. (1);
	\draw[->] (5)--(4);
	\draw[->] (1) .. controls (0.2,-0.4) and (0.2,-0.6) .. (5);
	\draw[->] (1)--(8);
	\draw[->] (2)--(8);
	\draw[->] (2bis)--(8bis);
	\draw[->] (8) to [out=60,in=120,looseness=8] (8);
	\draw[->] (8bis) to [out=60,in=120,looseness=8] (8bis);
	
	\draw[dotted,rounded corners=5] (-1.35,-1.15) rectangle (0.8,0.15);
	\end{tikzpicture}
	\label{ex:optimized-mwp}
	\caption{Optimized Example}
\end{figure}

\end{document}